\date{}
\newtheorem{lemma}{Lemma}
\newtheorem{theorem}{Theorem}
\newtheorem{ex}{Example}
\newtheorem{defn}{Definition}
\renewcommand{\epsilon}{\varepsilon}
\renewcommand{\le}{\leqslant}
\renewcommand{\ge}{\geqslant}
\def\lpf {{\rm lpf}}
\newcommand{\vnote}[1]{}
\def\F{\mathbb{F}}
\def \mC {\mathcal{C}}
\def \mC {\mathcal{C}}
\def \mF {\mathcal{F}}
\def \mS {\mathcal{S}}
\def \mX {\mathcal{X}}
\def \Xi {{X^{[i]}}}
\newcommand{\Ga}{\alpha}
\def \bx {{\bf x}}
\def \by {{\bf y}}
\def \bs {{\bf s}}
\def\m {{\rm m }}
\begin{document}

\title{A Construction of Optimal Frequency Hopping Sequence Set via Combination of Multiplicative and Additive Groups of Finite Fields}

\author{Xianhua Niu \thanks{School of Computer and Software Engineering, Xihua University and National Key Laboratory of Science and Technology on Communications, University of Electronic Science and Technology of China. Research supported by the Youth Science and Technology
Fund of Sichuan Province (No.2017JQ0059). Email: {\tt rurustef1212@gmail.com.}} \and Chaoping Xing\thanks{School of Physical and Mathematical Sciences, Nanyang Technological University, Singapore. Email: {\tt xingcp@ntu.edu.sg}.} }

\maketitle

\begin{abstract}
In literatures, there are various constructions of frequency hopping sequence (FHS for short) sets with good Hamming correlations. Some papers employed only multiplicative groups of finite fields to construct FHS sets, while  other papers implicitly used only additive groups of finite fields for construction of FHS sets.
In this paper, we make use of both multiplicative  and additive groups of finite fields simultaneously to present a construction of optimal FHS sets. The construction provides a new family of optimal $\left(q^m-1,\frac{q^{m-t}-1}{r},rq^t;\frac{q^{m-t}-1}{r}+1\right)$ frequency hopping sequence sets archiving the Peng-Fan bound. Thus, the  FHS sets constructed in literatures using either multiplicative groups or additive groups of finite fields are all included in our family.
In addition, some other FHS sets can be obtained via the well-known recursive constructions through one-coincidence sequence set.
\end{abstract}

\section{Introduction}
In frequency-hopping multiple access (FHMA) communication systems, each user's wideband signal is generated by hopping over a large number of frequency slots.
User's frequency slots used are chosen pseudo-randomly via a code called frequency hopping sequences. The degree of the mutual interference between users is clearly related to the Hamming correlation properties of the frequency hopping sequences, and the number of users allowed by the system for synchronous communication is determined by the number of frequency hopping sequences \cite{Fan,Simon}. In order to improve the system performance, it is desirable to employ frequency hopping sequences having low Hamming correlation to reduce the multiple-access interference (also called hits) of frequencies \cite{LG74}. Moreover, the required sequence length and alphabet size of FHS set are variable according to the specification of a given system or environment. Thus, the design of FHS set with good Hamming correlation property and flexible parameters is an important problem.

In practical applications, the required length and alphabet size of an FHS or an FHS set vary depending on the specification of a given system or environment. Thus, it is very important to select  FHS sets with optimal Hamming correlation under the given condition. In general, optimality of an FHS set is measured by the Peng-Fan bound ~\cite{PF}, whereas that of a single FHS is by the Lempel-Greenberger bound~\cite{LG74}. It is of particular interest to construct FHS sets which meet Peng-Fan bound. There are several algebraic method, combinatorial and recursive constructions in the literature~\cite{LG74,U98,Chu05,Ding08,Zhou11,Zeng12,Zeng13,Chung14,Ren14,Xu16,Bao16,Niu18,Xu18}.
\subsection{Known results}\label{subsec:1.1}
In literatures, there are various constructions of FHS sets with good Hamming correlations. Let us only recall the constructions relevant to our construction, namely those via either multiplicative  or additive groups of finite fields.
\begin{itemize}
\item[(1)] Lempel et al.~\cite{LG74,U98,Zhou11} showed that there is  an optimal $(q^m-1,q^{m-t},q^t;q^{m-t})$ FHS set for a prime power $q$ and integers $1\leq t\leq m-1$. This construction implicitly employed the additive group structure of a finite field.

\item[(2)]
Ding et al.~\cite{Ding08} constructed optimal $\left(q-1,\frac{q-1}{f},f;\frac{q-1}{f}+1\right)$ FHS set for a prime power $q$ and integer $f$ satisfying $f|(q-1)$, and $2\leq f\leq \frac{q-1}{f}-1$. This construction used  the multiplicative group structure of a finite field.
\end{itemize}

\subsection{Our result}

The optimal FHS sets given in Subsection \ref{subsec:1.1} were constructed via either multiplicative or additive group structure of finite fields. By mixing both multiplicative and additive group structures of finite fields, we obtain the following result.

\begin{theorem}\label{thm:1.1} Let $q$ be a prime power and let $r$ be a divisor of $q-1$. Then for any  $0\leq t\leq m-1$,  there is an  FHS set $\mS$ with parameters
\[ \left\{\begin{array}{ll}\left(q^m-1,\frac{q^{m-t}-1}{r},rq^t;\frac{q^{m-t}-1}{r}+1\right)& \mbox{if $r\ge 2$}\\
\left(q^m-1,1+\frac{q^{m-t}-1}{r},rq^t;\frac{q^{m-t}-1}{r}+1\right)=(q^m-1,q^{m-t},q^t;q^{m-t})& \mbox{if $r=1$} \end{array} \right.\]
In addition, if $q^m-1<e^2+(e+1)q^t-3e$ with $e=\frac{q^{m-t}-1}r$, then $\mS$ is optimal, i.e., it achieves the Peng-Fan bound.
\end{theorem}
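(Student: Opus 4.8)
The plan is to build $\mS$ explicitly from the two group structures and then bound its Hamming correlation by a direct counting argument. First I would fix a primitive element $\gamma$ of $\F_{q^m}$, so that $\F_{q^m}^{*}=\langle\gamma\rangle$ has order $q^m-1$, and use $\gamma^i$ to index the $i$-th coordinate of every sequence for $0\le i\le q^m-2$. The additive ingredient is an $\F_q$-subspace $P\subseteq\F_{q^m}$ of dimension $m-t$, together with the quotient map $\pi\colon\F_{q^m}\to\F_{q^m}/P$ onto a set of size $q^t$; this accounts for the factor $q^t$ in the alphabet. The multiplicative ingredient is the unique subgroup $R\le\F_{q^m}^{*}$ of order $r$, which lies in $\F_q^{*}$ because $r\mid q-1$; since $P$ is an $\F_q$-space it is stable under the scalar action of $R$, and this action is free on $P\setminus\{0\}$, so the $\tfrac{q^{m-t}-1}{r}$ orbit representatives $\beta_1,\dots,\beta_M$ index the $M$ sequences. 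The alphabet has size $rq^t$ and the frequency at position $i$ records both the $R$-coset data supplied by $\gamma^i$ (the factor $r$) and a projection of the form $\pi(\gamma^i\beta_j)$ (the factor $q^t$); when $r=1$ the multiplicative coordinate is vacuous and one adjoins a single distinguished sequence to pass from $q^{m-t}-1$ to $q^{m-t}$, recovering the Lempel parameters and the "$+1$".

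Next I would reduce the Hamming correlation to a counting problem. For two indices $\beta_j,\beta_{j'}$ and a lag $\tau$ with $(j,\tau)\neq(j',0)$, the correlation $H_{j,j'}(\tau)$ is the number of $i$ at which both alphabet coordinates agree. The multiplicative coordinate contributes a single residue condition on $\tau$ modulo $r$ that is independent of $i$, so it either annihilates the correlation or is automatically satisfied; in the latter case the count collapses to
\[
\#\bigl\{\,i:\ \gamma^i\,(\beta_j-\gamma^{\tau}\beta_{j'})\in P\,\bigr\},
\]
i.e.\ the number of powers of $\gamma$ lying in a prescribed coset of an $(m-t)$-dimensional $\F_q$-subspace (namely $\delta^{-1}P$ with $\delta=\beta_j-\gamma^{\tau}\beta_{j'}$). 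This formula treats autocorrelation and cross-correlation uniformly. Since $\{\gamma^i\}$ exhausts $\F_{q^m}^{*}$, such a coset meets it in $q^{m-t}$ points, or in $q^{m-t}-1$ points when the coset passes through $0$; the extra excluded point is exactly what produces the "$+1$" in $\lambda=\tfrac{q^{m-t}-1}{r}+1$.

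The hard part will be the factor-$r$ bookkeeping. A naive reading of the count above gives $q^{m-t}-1\approx rM$, which is a factor of $r$ too large, because a multiplicative character alone only thins the set of active lags $\tau$ — it does not shrink the count at a fixed lag. The crux is therefore to arrange the frequency assignment so that the order-$r$ action genuinely couples the two coordinates, forcing the relevant $\gamma^i$ to lie in a single $R$-coset inside the $(m-t)$-dimensional subspace and thus cutting the intersection number down to $\tfrac{q^{m-t}-1}{r}$ (respectively one more in the degenerate alignment). Establishing this exact bound $H_{j,j'}(\tau)\le M+1$ uniformly in $(j,j',\tau)$, together with the case distinction for when the relevant coset contains $0$, is the main obstacle; I would verify it by a careful orbit-counting argument rather than any Weil-type estimate, since the bound must hold unconditionally.

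Finally, for optimality I would substitute $n=q^m-1$, $M=e=\tfrac{q^{m-t}-1}{r}$, and $\ell=rq^t$ into the Peng-Fan bound
\[
\lambda\ \ge\ \left\lceil \frac{(nM-\ell)\,n}{(nM-1)\,\ell} \right\rceil ,
\]
and show by a direct computation that the right-hand fraction lies in the interval $(e,\,e+1]$ precisely when $q^m-1<e^2+(e+1)q^t-3e$. Under that inequality the ceiling equals $e+1$, which matches the correlation $\lambda=M+1$ achieved by the construction, so $\mS$ meets the Peng-Fan bound and is optimal. This last step is routine algebra once the correlation bound of the previous paragraph is in hand.
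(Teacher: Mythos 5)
Your proposal founders first on a misreading of the parameter convention. In this paper's notation $(N,M,\lambda;\ell)$, the third entry is the maximum Hamming correlation and the fourth (after the semicolon) is the frequency-slot alphabet size; the theorem therefore asserts $H_m(\mS)=rq^t$ over an alphabet of size $e+1$ with $e=\frac{q^{m-t}-1}{r}$. You construct the opposite object: sequences over an alphabet of size $rq^t$ (an $R$-coset coordinate paired with a projection modulo an $(m-t)$-dimensional subspace $P$), and you aim to prove $H_m\le e+1$; accordingly you substitute $\ell=rq^t$ into the Peng--Fan bound. Even if every step succeeded, you would have proved a different statement than Theorem~\ref{thm:1.1}, and your optimality computation cannot reproduce the stated criterion $q^m-1<e^2+(e+1)q^t-3e$, which in the paper arises precisely from taking $\ell=e+1$ in \eqref{e2} and reducing to the inequality \eqref{eq:1}.

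Independently of the swap, the step you yourself flag as ``the hard part'' is a genuine hole, and in the framework you describe it cannot be closed: because your multiplicative coordinate enters only through a residue condition on the lag $\tau$ that is independent of $i$, at any lag where that condition holds the coincidence count is exactly $\#\{i:\gamma^i\delta\in P\}=q^{m-t}-1=re$ for $\delta=\beta_j-\gamma^\tau\beta_{j'}\neq 0$, which exceeds your target $e+1$ whenever $e(r-1)>1$; no concrete coupling mechanism that ``cuts the intersection number down'' is supplied, only the assertion that one should exist. The paper's proof avoids this difficulty entirely by running the bound in the other direction: the alphabet is the set of $e+1$ values of $\phi(x)=\prod_{g\in G}\prod_{\beta\in V}(x+g+\beta)$, which Lemmas~\ref{l2} and \ref{l3} show is constant exactly on the classes $\bigcup_{g\in G}(\gamma g+V)$, and each correlation $H_{\bs_i\bs_j}(\tau)$ is the number of roots of $\phi(\theta^\tau x+\Ga_i)-\phi(x+\Ga_j)$, a polynomial of degree at most $rq^t$ shown to be nonzero via a leading-coefficient argument together with the coset-distinctness of Lemma~\ref{l1}. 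That is, the correlation is bounded by the orbit size $rq^t$ (the degree of $\phi$), not by the number of classes $e+1$ --- the quantity your approach tries, unsuccessfully, to control.
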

It is easy to see that by taking $t=0$, we get the FHS set given in Ding et al. ~\cite{Ding08}. By taking $r=1$, we get the FHS set given in Lempel et al.\cite{LG74,U98,Zhou11}. Compared with the constructions in \cite{Ding08,LG74,U98,Zhou11},
the FHS set in Theorem \ref{thm:1.1} allows more flexible parameters due to the free choice of $t$.

By applying the FHS sets in Theorem \ref{thm:1.1} to the standard recursive concatenation of   FHS sets with  one-coincidence (OC for short) sequence sets, we obtain the following new FHS sets.
\begin{theorem}\label{thm:1.2} Let $q$ be a prime power and let $r\geq2$ be a divisor of $q-1$. Then for any  $0\leq t\leq m-1$, we have FHS sets given in the following table.
\begin{table}[t]\label{tab1}\footnotesize
\newcommand{\tabincell}[2]{\begin{tabular}{@{}#1@{}}#2\end{tabular}}
\centering  
\caption{Parameters of some new recursive constructions of optimal FHS Sets }
\medskip

\begin{tabular}{c|c|c|c|l}  
\hline
Length $N$ & $M$ & $H_m$ & Alphabet Size $\ell$ & Constraints \\ \hline\hline  
$k(q^m-1)$ & $e$ & $rq^t$ & $k(e+1)$ &\tabincell{l}{$er+1=q^{m-t}, 1\leq t\leq m-1$,\\$q^m-q^t-1<\lpf(k)$, \\ $q^m-1<e^2+(e+1)p^t-3e$}
\\ \hline
$(p-1)(q^m-1)$ & $e$ & $rq^t$ & $p(e+1)$ &\tabincell{l}{$er+1=q^{m-t}, 1\leq t\leq m-1$,\\$q^m-q^t-1\leq p$ ,\\ $q^m-1<e^2+(e+1)p^t-3e$}
\\ \hline
$k(p-1)(q^m-1)$ & $e$ & $rq^t$ & $kp(e+1)$ &\tabincell{l}{$er+1=q^{m-t}, 1\leq t\leq m-1$,\\$q^m-q^t-1\leq \min\{\lpf(k)-1,p\}$,\\ $q^m-1<e^2+(e+1)p^t-3e$ }
\\ \hline \hline
\end{tabular}
\medskip
\\ In  the above table, $p$ denotes a prime power and $\lpf(k)$ denotes the least prime factor of an integer $k>1$.
\end{table}
\end{theorem}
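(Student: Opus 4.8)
The plan is to obtain all three rows from Theorem~\ref{thm:1.1} by feeding its FHS set $\mS$ into the standard recursive (concatenation) construction of FHS sets through one-coincidence (OC) sequence sets. Fix the case $r\ge 2$ of Theorem~\ref{thm:1.1}, so $\mS$ is a $(q^m-1,e,rq^t;e+1)$ FHS set with $er+1=q^{m-t}$, optimal under the stated inequality. The recursive construction I will invoke takes an $(N,M,\lambda;\ell)$ FHS set together with a one-coincidence sequence set of length $L$ over an alphabet of size $s$ having at least $\lambda M-1$ sequences, and returns an $(NL,M,\lambda;s\ell)$ FHS set; it leaves $M$ and the maximum Hamming correlation $\lambda$ unchanged while scaling the length by $L$ and the alphabet by $s$. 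Matching $N=q^m-1$, $M=e$, $\lambda=rq^t$, $\ell=e+1$ against the table, the three rows correspond to three choices of the OC ingredient.

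First I would exhibit the OC sequence sets. For Row~1, take the length-$k$ family $\{(it\bmod k)_{0\le t<k}:1\le i\le \lpf(k)-1\}$ over $\ZZ_k$: any two indices differ by less than $\lpf(k)$, hence the difference is prime to $k$, so the out-of-phase autocorrelation is $0$ and the cross-correlation is at most $1$, and the family size is $\lpf(k)-1$. For Row~2, take, for a primitive element $\gamma$ of $\F_p$, the length-$(p-1)$ family $\{(c+\gamma^{t})_{0\le t<p-1}:c\in\F_p\}$ over $\F_p$, which is one-coincidence of family size $p$. Row~3 applies the construction twice, first with the $\ZZ_k$ set and then with the $\F_p$ set; since $M$ and $\lambda$ are unchanged by the first application, the second imposes the same family-size requirement, so the two constraints combine to $q^m-q^t-1\le\min\{\lpf(k)-1,p\}$.

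The key numerical identity is that, because $q^m=q^t(er+1)$, one has $\lambda M-1=rq^t\cdot e-1=q^m-q^t-1$. Hence the requirement of ``at least $\lambda M-1$ sequences'' becomes exactly $q^m-q^t-1<\lpf(k)$ for Row~1, $q^m-q^t-1\le p$ for Row~2, and $q^m-q^t-1\le\min\{\lpf(k)-1,p\}$ for Row~3, which are precisely the constraints in the table. Feeding $\mS$ into the construction then yields FHS sets with parameters $(k(q^m-1),e,rq^t;k(e+1))$, $((p-1)(q^m-1),e,rq^t;p(e+1))$, and $(k(p-1)(q^m-1),e,rq^t;kp(e+1))$, matching the claimed $(N,M,H_m,\ell)$.

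It remains to verify optimality, which I expect to be the main obstacle. Since each constructed set attains $H_m=rq^t$, the Peng--Fan bound automatically gives $rq^t\ge\lceil\mathrm{PF}\rceil$, so optimality is equivalent to $\mathrm{PF}>rq^t-1$ for the scaled parameters; the difficulty is that the recursive construction strictly lowers the Peng--Fan value, so one must confirm its ceiling does not drop below $rq^t$. Computing $\mathrm{PF}$ for $(N_0L,e,rq^t;\ell_0 s)$ as a function of the scaling and letting the scale grow (the binding case, common to all three rows because $L$ and $s$ are comparable), the bound decreases to the limit $\frac{q^m-1}{e+1}-\frac1e$. Thus it suffices that $\frac{q^m-1}{e+1}-\frac1e>rq^t-1$, which rearranges via $q^m-1=rq^te+q^t-1$ to $eq^t(r-1)<e^2-e-1$; this is implied by the hypothesis $q^m-1<e^2+(e+1)q^t-3e$, equivalently $eq^t(r-1)<e^2-3e+1$. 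Monotonicity of the bound in the scaling factor then yields $\mathrm{PF}>rq^t-1$ throughout, so $\lceil\mathrm{PF}\rceil=rq^t=H_m$ and each of the three FHS sets is optimal.
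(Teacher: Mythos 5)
Your overall route is the paper's route: feed the $(q^m-1,e,rq^t;e+1)$ set of Theorem~\ref{thm:1.1} into the standard concatenation with exactly the three OC ingredients the paper cites (the $(k,\lpf(k)-1;k)$ set, the $(p-1,p;p)$ set, and their combination). But there is a genuine gap in how you license the concatenation. You state the recursive lemma with the hypothesis that the OC set have ``at least $\lambda M-1$ sequences.'' That is not the correct hypothesis, and as a general statement it is false: the concatenation requires the OC family size to be at least the \emph{maximum appearance number} $\mathrm{m}(\mathcal{S})$ of a frequency slot across the whole set (Lemma~\ref{lrec}), and $\mathrm{m}(\mathcal{S})$ can exceed $\lambda M-1$ (e.g.\ a sequence in which one symbol repeats while $H_m$ stays small, or degenerately a constant sequence, where $\mathrm{m}=N>\lambda M-1$). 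The identity $\lambda M-1=rq^te-1=q^m-q^t-1$ you compute is correct arithmetic, but the fact that $\mathrm{m}(\mathcal{S})$ equals this number is a special structural feature of this construction, not a consequence of the correlation parameters: it requires the counting argument of the paper's Lemma~\ref{lmax} (each frequency slot is the $\phi$-value of a coset union $\bigcup_{g\in G}(\alpha_i g+V)$ of size $rq^t$, each element of $\F_{q^m}$ occurs exactly once in each sequence $(\theta^k+\alpha_j)_k$, and $\alpha_i$ itself is missed once, giving exactly $rq^te-1$). Your proposal never proves anything about slot multiplicities in $\mathcal{S}$, so the applicability of the recursion — and likewise your ``apply it twice'' treatment of Row 3, where one must also control $\mathrm{m}$ of the intermediate set — is unjustified as written. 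This is precisely the content the paper supplies via Lemma~\ref{lmax} and the composed $(k(p-1),\min\{\lpf(k)-1,p\};kp)$ OC set.

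Your optimality verification also has a flaw in Rows 2 and 3 (though here you attempt more than the paper, which leaves the ceiling equality as an unverified hypothesis in Theorem~\ref{FHS-ex}). Your monotone-limit argument is sound for Row 1, where length and alphabet both scale by $k$: the Peng--Fan ratio decreases in $k$ to $\frac{q^m-1}{e+1}-\frac1e$, and the hypothesis $eq^t(r-1)<e^2-3e+1$ indeed implies this limit exceeds $rq^t-1$. But in Rows 2 and 3 the alphabet scales by $p$ while the length scales by $p-1$, so $N/\ell=\frac{(q^m-1)(p-1)}{(e+1)p}<\frac{q^m-1}{e+1}$ and the Peng--Fan ratio sits \emph{below} your claimed limit, approaching it from below as $p\to\infty$; e.g.\ for the paper's $(80,13,6;14)$ base set with $p=81$ the ratio is $\approx 5.57$ versus the limit $\approx 5.64$. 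Hence ``monotonicity yields $\mathrm{PF}>rq^t-1$ throughout'' does not follow for those rows: the binding case is the smallest admissible $p$ (namely $p\ge q^m-q^t-1$), and one must check that the extra slack in the hypothesis (the $-3e$ versus the $-e-1$ your limit computation needs) absorbs the finite-$p$ deficit. That verification is absent from your write-up, so Rows 2 and 3 of the optimality claim remain open in your argument.
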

\subsection{Our techniques}
Our approach is divided into four steps:
\begin{itemize}
\item[(i)] partition a finite field $\F_{q^m}$ into $\ell$ disjoint subsets $V_i$ for $i=1,2,\dots,\ell$;
\item[(ii)] construct a polynomial  $\phi(x)$ that is a constant polynomial in each $V_i$;
\item[(iii)] for every $b\in\F_{q^m}$, define an FHS $\bs_b:=\left(\phi(\theta^0+b),\phi(\theta+b),\phi(\theta^2+b),\dots,\phi(\theta^{q^m-2}+b)\right)$, where $\theta$ is a primitive element of $\F_{q^m}$;
\item[(iv)] choose a suitable subset $S$ of $\F_{q^m}$  to  form  an  FHS set $\{\bs_b:\; b\in S\}$.
\end{itemize}
 The key part of the above approach is partition of $\F_{q^m}$ into $\ell$ disjoint subsets $V_i$ for $i=1,2,\dots,\ell$.  Although different languages were adopted to construct FHS sets in \cite{Ding08,LG74,U98,Zhou11}, they implicitly used the above approach. More precisely speaking,  Ding et al. ~\cite{Ding08}  partitioned $\F_{q^m}$ into disjoint cosets of a multiplicative group of $\F_{q^m}$, while  Lempel et al.\cite{LG74,U98,Zhou11}  partitioned $\F_{q^m}$ into disjoint cosets of an $\F_q$-subspace   of $\F_{q^m}$.

 In this paper, we choose a multiplicative group $G$ of $\F_{q^m}$ and  an $\F_q$-subspace $V$  of $\F_{q^m}$, then  partition $\F_{q^m}$ into disjoint cosets of  by mixing the structures of $G$ and $V$. Thus, for the trivial group $G=\{1\}$, it degenerates to an $\F_q$-subspace $V$  of $\F_{q^m}$, while for the trivial vector space $V=\{0\}$ and $m=1$, it degenerates to a multiplication subgroup of $\F_{q^m}$.

\subsection{Organization}

The rest of this paper is organized as follows. In Section 2, we give some preliminaries to frequency hopping sequences. In Section 3, we present a construction of optimal FHS sets with new parameters by mixing both multiplicative and additive group structures of finite fields. In Section 4, we obtain some optimal FHS sets via the recursive construction through one-coincidence sequence sets. Finally, we conclude the paper in Section 5.
\section{Preliminaries}
Let $\mF$= \{$f_{1}$, $f_{2}$,\ldots, $f_{\ell}$\} be a frequency slot set with size $|\mF|$$=$$\ell$, and let $\mS$ be a set of $M$ frequency hopping sequences of length $N$. For any two frequency hopping sequences $\bx$=($x_{0}$, $x_{1}$,\ldots, $x_{N-1}$) and $\by$=($y_{0}$, $y_{1}$,\ldots, $y_{N-1}$)$\in\!\mS$, and any positive integer $\tau$, $0 \!\leq\!\tau\!\leq\!N-1$, the Hamming correlation function $H_{\bx\by}(\tau)$ of $\bx$ and $\by$ at time delay
$\tau$ is defined as follows:
\begin{eqnarray}
H_{\bx\by}(\tau)=\sum^{N-1}_{i=0}h(x_{i},y_{i+\tau}),
\end{eqnarray}
where $h(a, b)=1$ if $a=b$, and $h(a, b)=0$ otherwise. 

For a given FHS set $\mS$, the maximum Hamming autocorrelation $H_{a}(\mS)$, the maximum Hamming crosscorrelation $H_{c}(\mS)$ and the maximum Hamming
correlation $H_{m}(\mS)$ are defined as follows, respectively:
\begin{eqnarray*}
H_{a}(\mS)\!&\!=\!&\max\limits_{1\!\leq\!\tau\!\leq\!N-1}\{H_{\bx\bx}(\tau): \bx\!\in\!\mS\},\\
H_{c}(\mS)\!&\!=\!&\max\limits_{0\!\leq\!\tau\!\leq\!N-1}\{H_{\bx\by}(\tau): \bx,\by\in\!\mS,\bx\!\neq \!\by\},\\
H_{m}(\mS)\!&\!=\!&\max\{H_a(\mS),H_c(\mS)\}.
\end{eqnarray*}
In 2004, Peng and Fan~\cite{PF} showed that  the maximum Hamming correlation $H_m(\mS)$ of an    FHS set $\mS$ of $M$ sequences of length $N$ over a frequency slot set of size $\ell$ must obey
\begin{equation}\label{e2}
 H_m(\mS)\geq \left\lceil\frac{(NM-\ell)N}{(NM-1)\ell}\right\rceil.
\end{equation}
In this paper, an FHS set $\mS$ is said {\it optimal} if it achieves the Peng-Fan bound with equality.

The one-coincidence sequence set is a special FHS set which was proposed firstly by Shaar and Davies \cite{OC} in 1984.

\begin{defn}\label{OC}{\rm
A one-coincidence sequence set is a set of nonrepeating sequences, for which the peak of the Hamming crosscorrelation function equals one for any pair of sequences belonging to the set.
}\end{defn}
Equivalently speaking,  an OC sequence set is an FHS set with the maximum Hamming autocorrelation equal to $0$  and maximum Hamming crosscorrelation at most $1$.				

The following notations will be used throughout this paper:
\begin{itemize}
\item $ (N,M,\lambda;\ell)$ denotes an FHS set of $M$ sequences of length $N$ over a frequency slot set of size $\ell$, with the maximum Hamming correlation equals to $\lambda$;
\item
$ (n,s;v)$ denotes an OC sequence set of $s$ sequences of length $n$ over a frequency slot set of size $v$, with the maximum Hamming autocorrelation equal to $0$ and the maximum Hamming corsscorrelation at most $1$;
\item $r$ is a divisor of $q-1$;
\item $0\le t\le m-1$ are integers;
\item
$\lceil z\rceil$ is the smallest integer larger than or equal to a real number $z$.
\end{itemize}

\section{New construction of FHS Sets }
Let $G$ be a multiplicative subgroup $\F_q^*$ with $|G|=r$. We label all elements of $G=\{g_1,g_2,\cdots, g_r\}$.
Let $V$ be an $\F_{q}$-subspace of dimension $t$. Then $|V|=q^t$.

\begin{lemma}\label{l1}
There exist $ \alpha_1=0$, $\alpha_2,\cdots, \alpha_\ell\in \F_{q^m}$ with $\ell=1+\frac{q^{m-t}-1}{r}$ such that $V$ and $\{\alpha_ig+V:{2\leq i\leq \ell, g\in G}\}$ are $1+(\ell-1)r=q^{m-t}$ pairwise distinct cosets of $V$.
\end{lemma}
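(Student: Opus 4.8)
The plan is to pass to the quotient $\F_q$-vector space $W \eqdef \F_{q^m}/V$, which has dimension $m-t$ and hence exactly $|W| = q^{m-t}$ elements; these elements are precisely the cosets of $V$ in $\F_{q^m}$. Since the collection described in the statement lists $1 + (\ell-1)r = q^{m-t}$ members, proving that they are pairwise distinct is equivalent to proving that they exhaust all cosets of $V$. So it suffices to establish distinctness, and the counting will take care of the rest automatically.

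The central observation I would use is that $G$ acts on $W$ by scalar multiplication, $g\cdot(x+V) \eqdef gx + V$. This is well defined precisely because $V$ is an $\F_q$-subspace and $g \in G \subseteq \F_q^*$, so that $gV = V$; this is exactly the point at which the multiplicative structure of $G$ and the additive structure of $V$ are made compatible. Under this action the coset $\alpha_i g + V$ equals $g\cdot(\alpha_i + V)$ (using commutativity of $\F_{q^m}$), so the family $\{\alpha_i g + V : g \in G\}$ is nothing but the $G$-orbit of the coset $\alpha_i + V$.

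The key step, which I expect to be the main obstacle, is to show that $G$ acts \emph{freely} on the nonzero part $W\setminus\{0\}$. Suppose $g\cdot(x+V) = x+V$ for some $x \notin V$ and some $g \in G$ with $g \neq 1$. Then $(g-1)x \in V$; but $g-1$ is a \emph{nonzero} element of $\F_q$, hence invertible, and since $V$ is an $\F_q$-subspace this forces $x = (g-1)^{-1}\bigl((g-1)x\bigr) \in V$, a contradiction. Thus the stabilizer of every nonzero coset is trivial, and so every $G$-orbit in $W\setminus\{0\}$ has size exactly $|G| = r$. In particular this re-proves $r \mid q^{m-t}-1$, which is what makes $\ell$ an integer in the first place.

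To finish, I would simply count: the $q^{m-t}-1$ nonzero cosets split into orbits of size $r$, giving exactly $\frac{q^{m-t}-1}{r} = \ell - 1$ orbits. Choosing one representative $\alpha_i + V$ from each orbit for $2 \le i \le \ell$ (with $\alpha_i \notin V$, so that none of these coincides with the zero coset $V = \alpha_1 + V$), the sets $\{\alpha_i g + V : g \in G\}$ are $\ell-1$ pairwise disjoint orbits, each consisting of $r$ distinct cosets; together with $V$ itself they list all $q^{m-t}$ cosets exactly once. Once freeness is in hand, the counting is immediate. An alternative, action-free phrasing would verify directly that $\alpha_i g + V = \alpha_j g' + V$ forces $(i,g) = (j,g')$, but this reduces to the same invertibility of $g-1$ over $\F_q$ and is less transparent, so I would favor the quotient/orbit formulation.
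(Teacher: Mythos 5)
Your proof is correct, but it takes a genuinely different route from the paper. The paper argues greedily: it picks $\alpha_2 \in \F_{q^m}\setminus V$, then $\alpha_3$ outside $V\cup\bigcup_{g\in G}(\alpha_2 g+V)$, and so on, checking at each stage by hand that the new batch of cosets $\alpha_i g_1+V,\dots,\alpha_i g_r+V$ are distinct from each other and from all earlier ones (each check reduces, as in your argument, to the fact that a nonzero element of $\F_q$ is invertible and stabilizes $V$). You instead work in the quotient space $W=\F_{q^m}/V$ and observe that $G$ acts on $W$ by scalars, freely on $W\setminus\{0\}$, so the $q^{m-t}-1$ nonzero cosets decompose into exactly $\frac{q^{m-t}-1}{r}=\ell-1$ orbits of size $r$; the $\alpha_i$ are orbit representatives. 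Both proofs hinge on the identical key computation --- for $g\neq g'$ in $G\subseteq\F_q^*$, the element $g-g'$ is a unit of $\F_q$ and $(g-g')^{-1}V=V$ --- but your packaging buys several things the paper leaves implicit: it automatically guarantees that representatives exist at every stage (the paper's greedy choice of $\alpha_i$ silently requires the complement of the previously covered cosets to be nonempty, which needs a small counting check the paper omits), it shows the $q^{m-t}$ cosets are not merely distinct but exhaust all cosets of $V$, and it re-derives the divisibility $r\mid q^{m-t}-1$ that makes $\ell$ an integer. The paper's version, in exchange, is more elementary and directly algorithmic, which matches how the $\alpha_i$ are actually used in Step 1 of the construction.
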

\begin{proof}
Choose $\alpha_2\in \F_{q^m}\setminus V$. We claim  that $\alpha_2g_1+V, \cdots, \alpha_2g_r+V$ are pairwise distinct cosets of $V$. Suppose $\alpha_2g_i+V=\alpha_2g_j+V$ for some $1\leq i\le j\leq r$. Then, we have $\alpha_2(g_i-g_j)\in V$. Thus, $g_i-g_j=0$, otherwise, one would have $\alpha_2\in(g_i-g_j)^{-1}V=V$. This implies that $i=j$.

Next we choose $\alpha_3\in \F_{q^m}\setminus V\bigcup(\bigcup \limits_{g\in G}(\alpha_2g+V))$.
Then, in the same way, we can show that $\alpha_3g_1+V, \cdots, \alpha_3g_r+V$ are pairwise distinct cosets.
Furthermore, we  claim that $\alpha_2g_i+V\neq\alpha_3g_j+V$ for all $1\leq i\leq j\leq r$. Suppose $\alpha_2g_i+V=\alpha_3g_j+V$ for a pair $(i,j)$ with $1\leq i\leq j\leq r$, then we would  have $\alpha_2g_i-\alpha_3g_j\in V$, i.e., $\alpha_2g_ig_j^{-1}-\alpha_3\in g_j^{-1}V=V$. Thus, we have $\alpha_3\in \alpha_2g_ig_j^{-1}+V\subseteq \bigcup\limits_{g\in G}\alpha_2g+V$. This is a contradiction.

Continue this fashion to choose  $\alpha_i\in \F_{q^m}\setminus V\bigcup(\bigcup\limits_{2\le j\le i-1}\bigcup\limits_{g\in G}(\alpha_jg+V))$ for $i\ge 4$. Thus, we obtain all the desired cosets $V$ and $\{\alpha_ig+V:{2\leq i\leq \ell, g\in G}\}$.\end{proof}

\begin{lemma}\label{l2}
Define $\phi(x)=\prod\limits_{g\in G}\prod\limits_{\beta\in V}(x+g+\beta)$. Then $\phi(x)$ is a constant function on the set $\bigcup\limits_{i=1}^{r}(\gamma g_i+V)$ for any fixed $\gamma\in \F_{q^m}$.
\end{lemma}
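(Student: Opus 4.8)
The plan is to recognize $\phi$ as essentially a product of evaluations of a single linearized polynomial, and then to exploit the fact that the labels $g_i$ form a multiplicative group of order $r$. First I would introduce the \emph{subspace polynomial} $L_V(y) = \prod_{\beta\in V}(y-\beta)$. Since $V$ is an $\F_q$-subspace it is closed under negation, so $\prod_{\beta\in V}(y+\beta) = \prod_{\beta\in V}(y-\beta) = L_V(y)$; hence for each fixed $g$ one has $\prod_{\beta\in V}(x+g+\beta) = L_V(x+g)$, and therefore
\[
\phi(x) = \prod_{g\in G} L_V(x+g).
\]

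The crucial structural input is that $L_V$ is a $q$-linearized (additive) polynomial, so it is $\F_q$-linear: $L_V(a+b)=L_V(a)+L_V(b)$ and $L_V(ca)=c\,L_V(a)$ for $c\in\F_q$, and moreover $L_V$ vanishes identically on $V$. Using linearity I would first write $L_V(x+g)=L_V(x)+L_V(g)=L_V(x)+g\,L_V(1)$, since every $g\in G\subseteq\F_q^*$ lies in $\F_q$. This gives $\phi(x)=\prod_{g\in G}\bigl(L_V(x)+g\,L_V(1)\bigr)$, so $\phi(x)$ depends on $x$ only through the single field element $L_V(x)$. Next I would evaluate $L_V(x)$ on a point of the target set: if $x\in\gamma g_i+V$, write $x=\gamma g_i+v$ with $v\in V$, so that $L_V(x)=L_V(\gamma g_i)+L_V(v)=g_i\,L_V(\gamma)$, again using $g_i\in\F_q$ and $L_V|_V\equiv 0$.

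Substituting and factoring $g_i$ out of each of the $r$ factors then yields
\[
\phi(x)=\prod_{g\in G}\bigl(g_i\,L_V(\gamma)+g\,L_V(1)\bigr)=g_i^{\,r}\prod_{g\in G}\bigl(L_V(\gamma)+g_i^{-1}g\,L_V(1)\bigr).
\]
Because $G$ is a group, the substitution $g\mapsto g_i^{-1}g$ permutes $G$, so the remaining product equals $\prod_{h\in G}\bigl(L_V(\gamma)+h\,L_V(1)\bigr)$, which is manifestly independent of $i$. The one remaining point---and the heart of the lemma---is that the prefactor $g_i^{\,r}$ must disappear, which is exactly where the hypothesis $|G|=r$ enters: by Lagrange's theorem every element of a group of order $r$ satisfies $g_i^{\,r}=1$. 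Hence $\phi(x)=\prod_{h\in G}\bigl(L_V(\gamma)+h\,L_V(1)\bigr)$ for every $x\in\bigcup_{i=1}^{r}(\gamma g_i+V)$, a value depending only on $\gamma$, which is precisely the claim. I expect the group-order cancellation to be the conceptually decisive step, while the only point requiring genuine care is the justification that $L_V$ is truly $\F_q$-linear, i.e. the standard fact that the subspace polynomial of an $\F_q$-subspace is linearized; the rest is the bookkeeping displayed above.
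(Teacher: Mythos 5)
Your proof is correct, and it takes a genuinely different route from the paper's. The paper argues directly inside the double product: writing a point of the set as $\gamma\Ga+v$ with $\Ga\in G$, $v\in V$, it factors $\Ga$ out of all $rq^t$ linear factors and observes that $\beta\mapsto(v+\beta)\Ga^{-1}$ permutes $V$ while $g\mapsto g\Ga^{-1}$ permutes $G$, so that $\phi(\gamma\Ga+v)=\Ga^{rq^t}\phi(\gamma)=\phi(\gamma)$; the additive structure is handled by these elementary substitutions, and the cancellation uses $\Ga^{rq^t}=(\Ga^{r})^{q^t}=1$. You instead modularize through the subspace polynomial $L_V$: invoking the standard fact that $L_V$ is $q$-linearized (hence additive, $\F_q$-homogeneous, and identically zero on $V$), you obtain the identity $\phi(x)=\prod_{g\in G}\bigl(L_V(x)+gL_V(1)\bigr)$, so $\phi$ factors through $L_V$; the invariance under translation by $V$ is then automatic, and all that remains is the multiplicative reindexing $g\mapsto g_i^{-1}g$ together with $g_i^{\,r}=1$. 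Your computations check out (in particular $L_V(\gamma g_i+v)=g_iL_V(\gamma)$ is right, since $g_i\in\F_q$). What your approach buys is structural insight: $\phi=\psi\circ L_V$ with $\psi(z)=\prod_{g\in G}\bigl(z+gL_V(1)\bigr)$ invariant under multiplication by elements of $G$, which explains \emph{why} the lemma holds and cleanly separates the roles of $V$ and $G$; the cost is that you import a nontrivial standard theorem (that $\prod_{\beta\in V}(y-\beta)$ is a linearized polynomial, e.g.\ Lidl--Niederreiter, Thm.~3.52), which you correctly flag as the one point needing care, whereas the paper's substitution argument is entirely self-contained, needing only closure of $G$ and $V$ under the relevant operations. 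Both proofs ultimately hinge on the same decisive cancellation: an $r$-th power of an element of the order-$r$ group $G$ equals $1$.
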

\begin{proof}
Let $\gamma \Ga+v\in \bigcup\limits_{i=1}^{r}(\gamma g_i+V)$ for some $\Ga\in G$ and $v\in V$.

Then we have
\begin{eqnarray*}
\phi(\gamma \Ga+v)&=&\prod\limits_{g\in G}\prod\limits_{\beta\in V}(\gamma \Ga+v+g+\beta)=\Ga^{|G||V|}\prod\limits_{g\in G}\prod\limits_{\beta\in V}(\gamma +g\Ga^{-1}+\beta \Ga^{-1})\\
&=&\Ga^{rq^t}\prod\limits_{g\in G}\prod\limits_{\beta\in V}(\gamma +g+\beta \Ga^{-1})=\prod\limits_{g\in G}\prod\limits_{\beta\in V}(\gamma +g+\beta)=\phi(\gamma).
\end{eqnarray*}

Thus, $\phi(x)$ is a constant function on the set $\bigcup\limits_{i=1}^{r}(\gamma g_i+V)$ for any fixed $\gamma\in \F_{q^m}$. This completes the proof.
\end{proof}

\begin{lemma}\label{l3}
If $1\leq i\neq j\leq \ell$ with $\ell=1+\frac{q^{m-t}-1}{r}$, then $\phi(\alpha)\neq\phi(\beta)$ for all $\alpha\in \bigcup\limits_{g\in G}(\alpha_i g+V)$, $\beta \in \bigcup\limits_{g\in G}(\alpha_j g+V)$.
\end{lemma}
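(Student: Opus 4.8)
The plan is to reduce this separation statement for $\phi$ to a coset condition that Lemma \ref{l1} already controls. By Lemma \ref{l2} (taking $\gamma=\alpha_i$), $\phi$ is constant on each block $B_i:=\bigcup_{g\in G}(\alpha_i g+V)$, with value $\phi(\alpha_i)$; since $\alpha\in B_i$ and $\beta\in B_j$, it suffices to prove $\phi(\alpha_i)\neq\phi(\alpha_j)$ whenever $i\neq j$. The guiding idea is that $\phi$ is secretly the $r$-th power of an $\F_q$-linear map, up to an additive constant.

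First I would rewrite the inner product over $V$ via the linearized polynomial of the subspace. Set $L_V(y)=\prod_{\beta\in V}(y-\beta)$; this is the $\F_q$-linearized (additive) polynomial attached to $V$, hence $\F_q$-linear with kernel exactly $V$. Since $V=-V$, we get $\prod_{\beta\in V}(x+g+\beta)=L_V(x+g)$, so that $\phi(x)=\prod_{g\in G}\left(L_V(x)+L_V(g)\right)$. Because $G\subseteq\F_q^*$ and $L_V$ is $\F_q$-linear, $L_V(g)=g\,L_V(1)$ for every $g\in G$. Now using that $r\mid q-1$, the group $G$ is the unique subgroup of order $r$ in $\F_q^*$, i.e. the full set of $r$-th roots of unity, so $\prod_{g\in G}(Y-g)=Y^r-1$. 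Writing $u=L_V(x)$ and $c=L_V(1)$, a short computation then gives $\phi(x)=\prod_{g\in G}(u+gc)=u^r-(-1)^r c^r=L_V(x)^r-(-1)^r L_V(1)^r$. Consequently $\phi(\alpha_i)=\phi(\alpha_j)$ \emph{iff} $L_V(\alpha_i)^r=L_V(\alpha_j)^r$.

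It then remains to rule out $L_V(\alpha_i)^r=L_V(\alpha_j)^r$ for $i\neq j$, by a short case analysis against Lemma \ref{l1}. If $i=1$, then $\alpha_1=0$ forces $L_V(\alpha_1)=0$, whereas $\alpha_j\notin V$ for $j\ge2$ gives $L_V(\alpha_j)\neq0$, so the two $r$-th powers differ. If $i,j\ge2$ are distinct and $L_V(\alpha_i)^r=L_V(\alpha_j)^r$ with both sides nonzero, then (again because $r\mid q-1$, so all $r$-th roots of unity lie in $\F_q$ and form $G$) one would have $L_V(\alpha_i)=g\,L_V(\alpha_j)$ for some $g\in G$; by $\F_q$-linearity this reads $L_V(\alpha_i)=L_V(\alpha_j g)$, i.e. $\alpha_i-\alpha_j g\in V$, i.e. the cosets $\alpha_i+V$ and $\alpha_j g+V$ coincide. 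These lie in the distinct blocks $B_i$ and $B_j$, contradicting the pairwise distinctness of the cosets $\{\alpha_k g+V\}$ (and $V$) established in Lemma \ref{l1}.

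The step I expect to be the real crux is the collapse $\phi(x)=L_V(x)^r-(-1)^r L_V(1)^r$: turning the unwieldy double product into the $r$-th power of a single linear form is what simultaneously explains the block-constancy of Lemma \ref{l2} and converts block-separation into the clean coset inequality $\alpha_i+V\neq\alpha_j g+V$. The only other delicate point is that equality of $r$-th powers must force a genuine $G$-multiple relation between $L_V(\alpha_i)$ and $L_V(\alpha_j)$ — precisely the relation that Lemma \ref{l1} forbids across different blocks — and this is exactly where the hypothesis $r\mid q-1$ (ensuring the $r$-th roots of unity are exactly $G$) is used.
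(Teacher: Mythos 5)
Your proof is correct, but it takes a genuinely different route from the paper's. The paper disposes of Lemma~\ref{l3} by a degree/root-counting argument: if $\phi(\alpha)=\phi(\beta)=c$, then by Lemma~\ref{l2} \emph{every} element of the two blocks $\bigcup_{g\in G}(\alpha_ig+V)$ and $\bigcup_{g\in G}(\alpha_jg+V)$ is a root of $\phi(x)-c$, and since Lemma~\ref{l1} makes the constituent cosets pairwise distinct (hence the blocks disjoint), $\phi(x)-c$ has at least $q^t+rq^t$ roots while its degree is exactly $rq^t$; this would force $\phi(x)\equiv c$, absurd for a monic polynomial of positive degree. You instead derive the closed form $\phi(x)=L_V(x)^r-(-1)^rL_V(1)^r$ with $L_V(y)=\prod_{\beta\in V}(y-\beta)$, and translate $\phi(\alpha_i)=\phi(\alpha_j)$ into $L_V(\alpha_i)^r=L_V(\alpha_j)^r$, hence (since the $r$-th roots of unity in $\F_{q^m}$ are exactly $G$, as $x^r-1$ already has its full set of $r$ roots inside $\F_q$) into $L_V(\alpha_i)=L_V(g\alpha_j)$, i.e.\ the coset collision $\alpha_i+V=\alpha_jg+V$ that Lemma~\ref{l1} forbids. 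I checked the details and they hold: the identity $\prod_{g\in G}(u+gc)=u^r-(-1)^rc^r$ is a valid polynomial identity and covers the degenerate case $L_V(1)=0$ (i.e.\ $1\in V$) uniformly; $\ker L_V=V$ follows since $L_V$ has degree $q^t$ with $q^t$ visible roots; and your case split $i=1$ versus $i,j\ge2$ uses exactly what Lemma~\ref{l1} provides (note $\alpha_j\notin V$ for $j\ge2$ needs $1\in G$, which holds as $G$ is a group). The one imported ingredient is the classical fact that the subspace polynomial $\prod_{\beta\in V}(y-\beta)$ of an $\F_q$-subspace is $\F_q$-linearized; you assert it without proof, but it is standard (Lidl--Niederreiter, Thm.~3.52). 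What your route buys: structural insight --- it re-proves Lemma~\ref{l2} in one line (since $L_V(\alpha g+v)=gL_V(\alpha)$ and $g^r=1$), shows the fibers of $\phi$ are \emph{exactly} the blocks, and exhibits the frequency slots concretely as the values $L_V(\alpha_i)^r$, which would also streamline the correlation analysis in Theorem~\ref{FHS1}. What the paper's route buys: brevity and self-containedness, needing nothing beyond the bound on the number of roots of a nonzero polynomial together with the disjointness from Lemma~\ref{l1}.
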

\begin{proof}
Suppose $\phi(\alpha)=\phi(\beta)$. Let $c=\phi(\alpha)$. Consider the polynomial $\phi(x)-c$. Then all elements of $\Big(\bigcup\limits_{g\in G}(\alpha_i g+V)\Big)\bigcup \Big( \bigcup\limits_{g\in G}(\alpha_j g+V)\Big)$ are roots of $\phi(x)-c$. Thus, $\phi(x)-c$ has at least $\left|\Big(\bigcup\limits_{g\in G}(\alpha_i g+V)\Big)\bigcup \Big( \bigcup\limits_{g\in G}(\alpha_j g+V)\Big)\right|\ge |V|+|G||V|=q^t+rq^t$ roots. On the other hand, the degree of $\phi(x)-c$ is $rq^t$. This forces that $\phi(x)-c$ is identical $0$, i.e, $\phi(x)=c$. This contradiction completes the proof.
\end{proof}



\textbf{Construction of optimal FHS sets.}
\begin{enumerate}
\itemindent12pt
\item[Step {\rm1}:] Let $r$ be a divisor of $q-1$. Let $\theta$ be a generator of $\F_{q^m}^*$. Let $G=\F_{q}^*$ be a multiplicative subgroup of $\F_q$ with $|G|=r$. Let $V$ be an $\F_{q}$-subspace of $\F_{q^m}$ of dimension $t$ with $ 0\leq t\leq m-1$. Choose $\alpha_1=0$, $\alpha_2,\cdots, \alpha_\ell\in \F_{q^m}$ with $\ell=1+\frac{q^{m-t}-1}{r}$ satisfying that $V$, $\{\alpha_ig+V:{2\leq i\leq \ell, g\in G}\}$ are $q^{m-t}$ pairwise distinct cosets of $V$ as defined in Lemma \ref{l1}.

\item[Step {\rm2}:] Choose $\phi(x)$ as defined in Lemma \ref{l2}. For every $ \Ga_i\in\{\Ga_1,\Ga_2,\cdots, \Ga_\ell\}$, we  define a sequence
\[\bs_i:=(\phi(1+\Ga_i),\phi(\theta+\Ga_i),\phi(\theta^2+\Ga_i),\cdots, \phi(\theta^{q^m-2}+\Ga_i)).\]

\item[Step {\rm3}:] The desired FHS set $\mS$ is a collection of $\bs_i$, i.e.,
\[\mS=\left\{\begin{array}{ll}
\{\bs_i\}_{i=1}^\ell&\mbox{if $r=1$,}\\
\{\bs_i\}_{i=2}^\ell&\mbox{if $r\ge 2$.}
\end{array}
\right.\]
\end{enumerate}

\begin{theorem}\label{FHS1}
The FHS set $\mS$ constructed above  is a $\left(q^m-1,M,rq^t;\frac{q^{m-t}-1}{r}+1\right)$ FHS set with
\[M=\left\{\begin{array}{ll}
\ell=1+\frac{q^{m-t}-1}{r}=q^{m-t}&\mbox{if $r=1$,}\\
\ell-1=\frac{q^{m-t}-1}{r}&\mbox{if $r\ge 2$.}
\end{array}
\right.\]
\end{theorem}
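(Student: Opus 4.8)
The plan is to dispatch the three numerical parameters first and then to determine the maximum Hamming correlation, which carries all the content. The length $q^m-1$ and the count $M$ are read off directly from Steps~2--3 of the construction. For the alphabet size, Lemma~\ref{l1} partitions $\F_{q^m}$ into the $\ell$ ``blocks'' $V,\ \bigcup_{g\in G}(\alpha_2 g+V),\dots,\bigcup_{g\in G}(\alpha_\ell g+V)$, and Lemmas~\ref{l2} and~\ref{l3} say precisely that $\phi$ is constant on each block and takes distinct values on distinct blocks; hence $\phi$ realises exactly $\ell=1+\frac{q^{m-t}-1}{r}$ values, which is the claimed alphabet size. Everything therefore reduces to showing $H_m(\mS)=rq^t$.

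Next I would turn a single Hamming correlation into a counting problem over $\F_{q^m}$. Fixing two sequences $\bs_a,\bs_b$ and a delay $\tau$, and writing $\delta=\theta^\tau$, the substitution $u=\theta^i$ (which runs through $\F_{q^m}^*$) gives
\[ H_{\bs_a\bs_b}(\tau)=\#\{u\in\F_{q^m}^*:\ \phi(u+\alpha_a)=\phi(\delta u+\alpha_b)\}. \]
The device that makes this tractable is that, since $G\subseteq\F_q^*$ and $V$ is an $\F_q$-subspace, one has $gV=V$ for all $g\in G$, so the blocks are exactly the orbits of $G$ acting multiplicatively on $\F_{q^m}/V$. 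By Lemmas~\ref{l2} and~\ref{l3} the equality $\phi(u+\alpha_a)=\phi(\delta u+\alpha_b)$ is equivalent to $u+\alpha_a$ and $\delta u+\alpha_b$ lying in one block, i.e.\ to $(\delta-g)u\in(g\alpha_a-\alpha_b)+V$ for some $g\in G$. Setting $S_g:=\{u\in\F_{q^m}:(\delta-g)u\in(g\alpha_a-\alpha_b)+V\}$, this rewrites the correlation as $H_{\bs_a\bs_b}(\tau)=\big|\F_{q^m}^*\cap\bigcup_{g\in G}S_g\big|$.

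The upper bound $H_m(\mS)\le rq^t$ now follows by bounding $\sum_{g\in G}|S_g|$. For every $g$ with $\delta\neq g$ the scalar $\delta-g$ is invertible, $S_g$ is a single coset of the $t$-dimensional space $(\delta-g)^{-1}V$, and $|S_g|=q^t$; summing over the $r$ group elements gives $rq^t$. The delicate case --- and the step I expect to be the main obstacle --- is $\delta=g_0\in G$, which can occur only when $\theta^\tau$ falls into the small subgroup $G$; then $S_{g_0}$ is either empty or all of $\F_{q^m}$. Here Lemma~\ref{l1} is exactly what saves the construction: $S_{g_0}=\F_{q^m}$ would force $g_0\alpha_a-\alpha_b\in V$, i.e.\ the coset $\alpha_a g_0+V$ to equal $\alpha_b+V$; for a crosscorrelation ($a\neq b$) these cosets lie in different $G$-orbits, and for an autocorrelation ($a=b$, $\tau\neq0$, hence $g_0\neq1$) they are two distinct cosets of the same orbit --- both ruled out by the pairwise distinctness in Lemma~\ref{l1}. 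Thus $S_{g_0}=\emptyset$ and the bound even drops to $(r-1)q^t$ there, so $H_m(\mS)\le rq^t$ in every case.

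It remains to check that the value $rq^t$ is actually attained. For $r=1$ this is immediate: a crosscorrelation at any $\tau\neq0$ between distinct sequences has $\delta\notin\{1\}=G$, so $S_1$ has size $q^t$ and misses $0$ (as $\alpha_a-\alpha_b\notin V$), giving $H_{\bs_a\bs_b}(\tau)=q^t$. For $r\ge2$ I would read off equality from the upper bound together with the Peng--Fan inequality~\eqref{e2}: whenever that lower bound equals $rq^t$ --- which is the regime in which $\mS$ is asserted to be optimal --- the two bounds force $H_m(\mS)=rq^t$. Direct attainment in the remaining cases can be verified by summing $H_{\bs_a\bs_b}(\tau)$ over all $\tau$ for a fixed pair $a\neq b$ and comparing the resulting average against $rq^t-1$. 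I would present the coset reduction and the upper bound as the heart of the proof, singling out the degenerate case $\delta\in G$ --- the only place where the full force of Lemma~\ref{l1} is needed --- as the crux.
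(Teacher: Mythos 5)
Your proposal is correct, but it reaches the bound $H_m(\mS)\le rq^t$ by a genuinely different route than the paper. The paper treats each correlation as a root-counting problem for the polynomial $\phi(\theta^\tau x+\Ga_i)-\phi(x+\Ga_j)$: since $\deg\phi=rq^t$, it suffices to show this difference is not the zero polynomial, which the paper does by comparing leading coefficients (so that a degenerate case can occur only when $\theta^\tau\in G$), reducing to the identity $\phi(x+\Ga_i\theta^{-\tau})=\phi(x+\Ga_j)$, and then evaluating at a point $\gamma$ with $\gamma+\Ga_j\in V$ to contradict Lemmas \ref{l3} and \ref{l1}. You never invoke the degree of $\phi$ in the correlation estimate: instead you use Lemmas \ref{l1}--\ref{l3} to identify the level sets of $\phi$ with the $G$-orbit blocks $\bigcup_{g\in G}(\Ga_i g+V)$, decompose the solution set as $\bigcup_{g\in G}S_g$ where each nondegenerate $S_g$ is a single coset of the $t$-dimensional subspace $(\delta-g)^{-1}V$ and hence has size $q^t$, and rule out the degenerate case $\delta=g_0\in G$ by the same coset-distinctness from Lemma \ref{l1} that the paper uses; I checked the equivalence ``$\phi(u+\Ga_a)=\phi(\delta u+\Ga_b)$ iff $(\delta-g)u\in(g\Ga_a-\Ga_b)+V$ for some $g\in G$'' and it is correct, using $gV=V$ and the fact that the blocks partition $\F_{q^m}$. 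Your argument is more structural, generalizes beyond polynomial $\phi$, and even gives the sharper bound $(r-1)q^t$ when $\theta^\tau\in G$; the paper's is shorter given that $\phi$ is an explicit polynomial of known degree. One caveat on your attainment discussion: the averaging fallback you suggest for $r\ge2$ does not always work --- e.g.\ for $q=5$, $m=2$, $t=1$, $r=2$ the crosscorrelation sum over all $24$ delays is $205$, so the average is below $rq^t-1=9$ and cannot certify a delay achieving $10$. But this is not a gap relative to the paper: the paper's facts (i)--(ii) together with its claims (a) and (b) likewise establish only the upper bound $H_m(\mS)\le rq^t$, exact attainment is not proved there either, and in the optimality regime of Theorem \ref{thm:1.1} the Peng--Fan bound forces equality, exactly as you observe.
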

\begin{proof} The sequence length of $\mS$ is clearly $q^m-1$. The family size of $\mS$ is also clear.
Furthermore, it follows from  Lemmas \ref{l2} and \ref{l3} that the size of frequency slot set of $\mS$ is $\ell=1+\frac{q^{m-t}-1}{r}$.

Thus, it is sufficient to show that the  maximum Hamming correlation of $\mS$ is $rq^t$. Given the facts: (i) the Hamming correlation $H_{\bs_i\bs_j}(\tau)$  at time delay
$\tau$ is the number of the roots of $\phi(\theta^\tau x+\Ga_i)- \phi(x+\Ga_j)$; and (ii) the degree of $\phi(\theta^\tau x+\Ga_i)- \phi(x+\Ga_j)$ is at most $rq^t$, it is equivalent to showing  the following two inequalities.
\begin{itemize}
\item[(a)] $\phi(\theta^\tau x+\Ga_i)\neq \phi(x+\Ga_i), \quad \mbox{$\forall 1\leq \tau\leq q^m-2$ and $2\le i\le \ell$ if $r\ge 2$ (and $1\le i\le \ell$ if $r=1$)}$.
\item[(b)] $\phi(\theta^\tau x+\Ga_i)\neq \phi(x+\Ga_j), \quad \mbox{$\forall 0\leq \tau\leq q^m-2$ and $1\le i<j\le \ell$}.$
\end{itemize}
Let us prove (a) by contradiction. Suppose $\phi(\theta^\tau x+\Ga_i)= \phi(x+\Ga_i)$ for some $1\leq \tau\leq q^m-2$. Then, by comparing the leading coefficients of $\phi(\theta^\tau x+\Ga_i)$ and $ \phi(x+\Ga_i)$, we have $(\theta^\tau)^{r}=1$, i.e., $\theta^\tau \in G$. If $r=1$, i.e, $G=\{1\}$, then  $\theta^\tau=1$. This is a contradiction since $\theta^\tau\neq1$ for all $1\leq \tau\leq q^m-2$.

Now we assume that $r\ge 2$.
Then $2\le i\le \ell$ and  we have
\begin{eqnarray*}
 \phi(x+\Ga_i)=\phi(\theta^\tau x+\Ga_i)=(\theta^\tau)^{r}\phi(x+\Ga_i\theta^{-\tau})=\phi(x+\Ga_i\theta^{-\tau}).
\end{eqnarray*}

Choose $\gamma\in \F_{q^m}$ such that $\gamma+\Ga_i \in V$. By Lemma \ref{l3}, we must have $\gamma+\Ga_i\theta^{-\tau} \in V$.
This gives $(\gamma+\Ga_i\theta^{-\tau})-(\gamma+\Ga_i)\in V$, i.e. $\Ga_i(\theta^{-\tau}-1) \in V$. As $\theta^{-\tau}\neq 1$, i.e., $\theta^{-\tau}-1\in\F_q^*$, we have $\Ga_i \in (\theta^{-\tau}-1)^{-1}V=V$. This is a contradiction by Lemma \ref{l1}.

Again, we prove (b) by contradiction. Suppose $\phi(\theta^\tau x+\Ga_i)= \phi(x+\Ga_j)$ for some $0\leq \tau\leq q^m-2$ and $1\le i<j\le \ell$. By comparing the leading coefficients with the same arguments given in the above proof of (a), we have $\theta^\tau\in G$ and
$
\phi(x+\Ga_i\theta^{-\tau})= \phi(x+\Ga_j).
$

Choose $\gamma\in \F_{q^m}$ such that $\gamma+\Ga_j\in V$. By Lemma \ref{l3}, we have $\gamma+\Ga_i\theta^{-\tau} \in V$.
This gives $(\gamma+\Ga_i\theta^{-\tau})-(\gamma+\Ga_j)\in V$, i.e., $\Ga_j\in \Ga_i\theta^{-\tau}+V\subseteq\bigcup_{g\in G}(\Ga_ig+V)$. This is a contradiction by Lemma \ref{l1}. The proof is completed.
\end{proof}

{\bf Proof of Theorem \ref{thm:1.1}:}
\begin{proof} The first part of Theorem \ref{thm:1.1} was already proved in Theorem \ref{FHS1}. Let us prove the second part only.

By the Peng-Fan bound, the FHS set $\mS$ is optimal if and only if the following inequality is satisfied.
\begin{equation}\label{eq:1}
\frac{e(q^m-1)-(e+1)}{e(q^m-1)-1}\cdot\frac{q^m-1}{e+1}>rq^t-1.
\end{equation}
The inequality \eqref{eq:1} is equivalent to the following inequality.
\begin{equation}\label{eq:2}
e(q^m-1)^2-(e^3+(e^2+e)(q^t-1)+1)(q^m-1)+(e+1)(q^t-1+e)>0.
\end{equation}
The inequality \eqref{eq:2}  is always true if $q^m-1<e^2+(e+1)q^t-3e$.  This completes the proof.
\end{proof}


We now illustrate our construction  by the following examples.
\begin{ex}\label{e1}{\rm
Let $q=3$, $r=2$, $m=4$, and $t=1$.

Let $\theta$ be a generator of $\F_{3^4}^*$. Let $G=\F_{3}^*$ be the multiplicative subgroup of $\F_3$ with $|G|=2$. Let $V$ be an $\F_{3}$-subspace of $\F_{3^4}$ of dimension $t=1$.

Choose $\alpha=$$\{\alpha_i\in \F_{3^4}:2\leq i\leq14\}$ satisfying that $V$, $\{\alpha_ig+V:{2\leq i\leq 14, g\in G}\}$ are $27$ pairwise distinct cosets of $V$ as defined in Lemma \ref{l1}.

Choose $\phi(x)$ as defined in Lemma \ref{l2}. For every $ \Ga_i\in\{\Ga_2,\cdots, \Ga_{14}\}$, we can obtain an FHS set $\mS_1=\{\bs_i, 2\leq i\leq14\}$ with
\begin{eqnarray*}
\bs_i=\left(\phi(1+\alpha_{i}),\phi(\theta+\alpha_{i}),\phi(\theta^2+\alpha_{i}),\cdots, \phi(\theta^{79}+\alpha_{i})\right).
\end{eqnarray*}
It is easy to check that FHS set $\mS_1$ is an optimal $(80,13,6;14)$ FHS set with new parameters.
}\end{ex}

\begin{ex}\label{e2}{\rm
Let $q=3$, $r=2$, $m=6$, and $t=2$.

Let $\theta$ be a generator of $\F_{3^6}^*$. Let $G=\F_{3}^*$ be the multiplicative subgroup of $\F_3$ with $|G|=2$. Let $V$ be an $\F_{3}$-subspace of $\F_{3^6}$ of dimension $t=2$.

Choose $\alpha=$$\{\alpha_i\in \F_{3^6}:2\leq i\leq41\}$ satisfying that $V$, $\{\alpha_ig+V:{2\leq i\leq 41, g\in G}\}$ are $81$ pairwise distinct cosets of $V$ as defined in Lemma \ref{l1}.

Choose $\phi(x)$ as defined in Lemma \ref{l2}. For every $ \Ga_i\in\{\Ga_2,\cdots, \Ga_{41}\}$, we can obtain an FHS set $\mS_2=\{\bs_i, 2\leq i\leq41\}$ with
\begin{eqnarray*}
\bs_i=\left(\phi(1+\alpha_{i}),\phi(\theta+\alpha_{i}),\phi(\theta^2+\alpha_{i}),\cdots, \phi(\theta^{727}+\alpha_{i})\right).
\end{eqnarray*}
It is easy to check that FHS set $\mS_2$ is an optimal $(728,40,18;41)$ FHS set with new parameters.
}\end{ex}

\begin{ex}\label{e3}{\rm
Let $q=7$, $r=3$, $m=3$, and $t=1$.

Let $\theta$ be a generator of $\F_{7^3}^*$. Let $G=\F_{7}^*$ be the multiplicative subgroup of $\F_7$ with $|G|=3$. Let $V$ be an $\F_{7}$-subspace of $\F_{7^3}$ of dimension $t=1$.

Choose $\alpha=$$\{\alpha_i\in \F_{7^3}:2\leq i\leq17\}$ satisfying that $V$, $\{\alpha_ig+V:{2\leq i\leq 17, g\in G}\}$ are $49$ pairwise distinct cosets of $V$ as defined in Lemma \ref{l1}.

Choose $\phi(x)$ as defined in Lemma \ref{l2}. For every $ \Ga_i\in\{\Ga_2,\cdots, \Ga_{17}\}$, we can obtain an FHS set $\mS_3=\{\bs_i : 2\leq i\leq17\}$ with
\begin{eqnarray*}
\bs_i=\left(\phi(1+\alpha_{i}),\phi(\theta+\alpha_{i}),\phi(\theta^2+\alpha_{i}),\cdots, \phi(\theta^{341}+\alpha_{i})\right).
\end{eqnarray*}
It is easy to check that FHS set $\mS_3$ is an optimal $(342,16,21;17)$ FHS set with new parameters.
}\end{ex}

\section{Recursive Constructions of FHS sets with new parameters}

Recursive constructions have been proposed in~\cite{Chung14,Ren14,Xu16,Bao16,Niu18} to generate new families of optimal FHS sets under certain conditions. The framework of recursive constructions in~\cite{Niu18} can be used to get FHS set with new parameters, which increase the length and alphabet size of the original FHS set, but preserve its family size and maximum Hamming correlation. In this section, by applying the FHS set $\mS$ in Theorem \ref{FHS1} to a recursive construction, we  obtain some new classes of FHS sets.

Let us recall a recursive construction first. Denote by $\m(\mS)$ the  maximum appearance number of frequency slots of an FHS set $\mS$.
\begin{lemma}\label{lrec}
Whenever there exist an $(N,M,H_m;l)$ FHS set $\mS$ and an $(n,s;v)$ OC sequence set $\mC$ with $s\geq\m(\mS)$, there is an $(nN,M,H_m;vl)$ FHS set $\mX$.
\end{lemma}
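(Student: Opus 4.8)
The plan is to construct the new FHS set $\mX$ explicitly by concatenating the sequences of $\mS$ with those of $\mC$, following the standard substitution idea behind recursive FHS constructions. First I would set up notation: write $\mS=\{\mv{s}_1,\dots,\mv{s}_M\}$ over a slot set $\mF=\{f_1,\dots,f_\ell\}$, and $\mC=\{\mv{c}_1,\dots,\mv{c}_s\}$ over a slot set of size $v$. For each frequency slot $f_k\in\mF$, I would associate a distinct OC sequence $\mv{c}_{\pi(k)}$ from $\mC$; this is possible exactly because $s\ge \m(\mS)\ge \ell$ ensures we have enough OC sequences to cover all the slots that actually appear. The new sequence $\mX_j$ of length $nN$ is then obtained from $\mv{s}_j=(x_0,\dots,x_{N-1})$ by replacing each symbol $x_i\in\mF$ with the length-$n$ block $\mv{c}_{\pi(x_i)}$, so that the alphabet of the composite sequences has size $vl$ as claimed, the length is $nN$, and the family size stays $M$.

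The core of the argument is the Hamming correlation computation, so the key steps are to bound $H_{\mX_a\mX_b}(\Gt)$ for an arbitrary shift $\Gt$ and show it never exceeds $H_m$. I would write a general shift of length $nN$ as $\Gt = un + w$ with $0\le w< n$, and carefully track how two concatenated sequences align: each agreement in the composite sequence forces an agreement between a pair of inner OC blocks at some internal offset. The plan is to split into the case $w=0$ (block-aligned) and $w\ne 0$ (the blocks are misaligned). In the block-aligned case every coincidence reduces to the event that the outer symbols of $\mv{s}_a$ and $\mv{s}_b$ agree and then the identical inner OC blocks contribute $n$ hits each; summing over the $N$ positions recovers exactly $n\cdot H_{\mv{s}_a\mv{s}_b}(u)\le n H_m$, which still must be reconciled against the claim that the composite correlation is bounded by $H_m$ itself, so I would need to recheck the precise normalization intended by the lemma rather than over-counting.

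The hard part will be the misaligned case $w\ne 0$, where the OC property of $\mC$ is what saves the bound: when two blocks overlap at a nonzero internal offset, the one-coincidence condition guarantees at most one hit between any two (possibly equal) OC sequences, because the maximum Hamming autocorrelation of $\mC$ is $0$ and the maximum crosscorrelation is at most $1$. I would argue that at an unaligned shift each of the $N$ outer positions contributes at most $1$ coincidence, and then show these local contributions organize into at most $H_m$ total by relating the pattern of which blocks meet which to an outer Hamming correlation of $\mS$ at a controlled shift. The delicate bookkeeping is in verifying that distinct slots get distinct OC sequences (so that a genuine disagreement at the outer level produces a cross-pair, invoking the crosscorrelation bound $\le 1$, while an agreement produces an auto-pair, invoking the autocorrelation bound $0$ unless the inner offset is $0$). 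Once both cases yield a bound of $H_m$, I would conclude that $H_m(\mX)\le H_m$, and a matching lower bound from inheriting the worst correlation of $\mS$ at block-aligned shifts gives equality, completing the verification that $\mX$ is an $(nN,M,H_m;vl)$ FHS set.
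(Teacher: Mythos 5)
There is a genuine gap here — in fact two. First, note that the paper itself does not prove Lemma~\ref{lrec}: it is recalled from the recursive framework of \cite{Niu18} (see also \cite{Chung14,Bao16}), in which the symbols of the composite sequences are \emph{pairs} (outer frequency slot, inner OC symbol), and in which distinct \emph{appearances} of one and the same slot are assigned pairwise distinct OC sequences. Your proposal deviates from this on both counts, fatally. You assign one OC sequence per frequency slot, justified by ``$s\ge\m(\mS)\ge\ell$''; but $\m(\mS)$ is the maximum number of appearances of a single slot summed over the whole set (this is exactly how Lemma~\ref{lmax} computes it), and it bears no relation to $\ell$ in general — a set in which every slot appears once has $\m(\mS)=1$. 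The hypothesis $s\ge\m(\mS)$ exists precisely so that the up to $\m(\mS)$ occurrences of the same slot can each receive a \emph{different} OC sequence. Moreover, replacing each symbol $x_i$ by the block $\mv{c}_{\pi(x_i)}$ yields sequences over the OC alphabet alone, of size $v$, not $v\ell$ as you claim; obtaining alphabet size $v\ell$ requires the composite symbol to carry the pair, and this is not bookkeeping: with pairs, a coincidence in $\mX$ \emph{forces} a coincidence of the outer symbols, which is what caps the number of contributing outer positions by $H_m$.

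These defects surface exactly where your argument stalls. In your block-aligned case the two identical inner blocks contribute $n$ hits per outer hit, giving $n\,H_{\bx\by}(u)$, which can be as large as $nH_m$; there is no ``normalization'' to recheck — the lemma bounds the ordinary Hamming correlation, so your construction simply violates it. Under per-appearance assignment this case is harmless: any outer hit at a nonzero composite shift pairs two \emph{distinct} appearances, hence two distinct OC sequences, whose crosscorrelation (even at inner shift $0$) is at most $1$. Your misaligned case has a second, unflagged hole: without pair-valued symbols a composite hit does \emph{not} require an outer hit, since blocks built from different slots are merely distinct OC sequences and may still agree in one position; thus the total is bounded only by $N$, not $H_m$, and your hoped-for step of organizing the local contributions into an outer correlation cannot be carried out. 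Finally, the standard construction interleaves rather than concatenates (the symbol at position $iN+j$ has outer index $j$), so that for each outer position the inner comparison occurs at a single shift; with concatenation each block overlaps two consecutive shifted blocks, which already doubles the count. With these three ingredients — pairs, per-appearance assignment, interleaved indexing — every nonzero shift yields at most $H_m$ outer hits, each contributing at most one inner coincidence (autocorrelation $0$ for the same appearance at nonzero inner shift, crosscorrelation at most $1$ for distinct appearances), which is the proof the lemma is quoting.
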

To apply Lemma \ref{lrec}, we have to give an upper bound on $\m(\mS)$.
\begin{lemma}\label{lmax} For $r\geq2$,
the maximum appearance number $\m(\mS)$ of frequency slots of the FHS set $\mS$ in Theorem \ref{FHS1} is upper bounded by $q^m-q^t-1$.
\end{lemma}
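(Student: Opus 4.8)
The plan is to bound, for each individual sequence $\bs_i$ in $\mS$, the number of times any single frequency slot can occur, since $\m(\mS)$ is the maximum of this quantity over all sequences and all slots. Throughout I note that $r\ge 2$ forces $q\ge 3$ (because $r\mid q-1$), and that for $r\ge 2$ the set $\mS$ consists only of the $\bs_i$ with $2\le i\le\ell$, for which Lemma~\ref{l1} guarantees $\alpha_i\notin V$.

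First I would reparametrize the positions of $\bs_i$ by field elements. As $\tau$ ranges over $0,1,\dots,q^m-2$, the power $\theta^\tau$ ranges bijectively over $\F_{q^m}^*$, so $y:=\theta^\tau+\alpha_i$ ranges bijectively over $\F_{q^m}\setminus\{\alpha_i\}$. Consequently the number of times a frequency slot $c$ appears in $\bs_i$ equals $|\phi^{-1}(c)\setminus\{\alpha_i\}|$, that is, the size of the level set of $\phi$ at value $c$, reduced by $1$ exactly when $\alpha_i$ lies in that level set.

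The crux is to pin down one level set exactly. By Lemma~\ref{l2} (taking $\gamma=0$) the polynomial $\phi$ is constant, equal to $\phi(0)$, on $V=\bigcup_{g\in G}(0\cdot g+V)$, and by Lemma~\ref{l3} it takes pairwise distinct values on the distinct \emph{super-cosets} $V$ and $\bigcup_{g\in G}(\alpha_j g+V)$. Since these super-cosets partition $\F_{q^m}$, the level set $\phi^{-1}(\phi(0))$ is precisely $V$, of size $q^t$. Because $\alpha_i\notin V$, the slot $\phi(0)$ therefore occurs in $\bs_i$ exactly $|V|=q^t$ times. These $q^t$ positions are thereby unavailable to every other slot, so any slot $c\neq\phi(0)$ can occur at most $(q^m-1)-q^t$ times, while $\phi(0)$ itself occurs $q^t$ times.

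Finally I would verify that $q^t\le q^m-1-q^t$, i.e.\ $2q^t+1\le q^m$: this holds because $t\le m-1$ and $q\ge 3$ give $q^m\ge q^{t+1}\ge 3q^t\ge 2q^t+1$. Hence every frequency slot appears at most $q^m-1-q^t=q^m-q^t-1$ times in each $\bs_i$, and taking the maximum over all sequences yields $\m(\mS)\le q^m-q^t-1$. I expect no real obstacle here; the only points demanding care are the exact identification $\phi^{-1}(\phi(0))=V$ (so that the count of $\phi(0)$ is exactly $q^t$, which is what simultaneously caps $\phi(0)$ and frees up at most $q^m-1-q^t$ positions for the others) and confirming $\alpha_i\notin V$ for every surviving index $i\ge 2$, both immediate from Lemmas~\ref{l1}--\ref{l3}. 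It is worth remarking that this bound is deliberately crude: the true value is $rq^t$ when $\ell\ge 3$, but $q^m-q^t-1$ is the clean estimate that feeds into the recursive construction of Lemma~\ref{lrec}.
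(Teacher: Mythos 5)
Your identification of the level sets of $\phi$ is correct---by Lemmas \ref{l2} and \ref{l3} the level sets are exactly $V$ and the super-cosets $\bigcup_{g\in G}(\Ga_i g+V)$, and your reparametrization of the positions of $\bs_j$ by the arguments $\F_{q^m}\setminus\{\Ga_j\}$ is exactly right---but the proof bounds the wrong quantity. You read $\m(\mS)$ as the maximum number of occurrences of a slot \emph{within a single sequence}, whereas in this paper it is the maximum over slots of the \emph{total} number of occurrences across all sequences of the set: the paper's proof computes $\m(\mS)=\max_{1\le i\le\ell}\bigl\{\sum_{j=2}^{\ell}n_{ij}\bigr\}$, where $n_{ij}$ counts the elements of the $i$-th super-coset among the arguments of $\bs_j$. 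This is forced by the role of $\m(\mS)$ in the recursion: Lemma \ref{lrec} needs one OC sequence per occurrence of a slot in the \emph{whole set} (hence $s>q^m-q^t-1$ in Theorem \ref{FHS-ex}); under your per-sequence reading one would only ever need $s>rq^t$, e.g.\ $s>6$ in Example \ref{e1}, yet the paper computes $\m(\mS_1)=77=q^m-q^t-1$ for the $(80,13,6;14)$ set and accordingly uses a $(79,78;79)$ OC set. Your own closing remark---that the ``true value'' per sequence is $rq^t$---should have been the red flag: a lemma asserting the bound $q^m-q^t-1$ for a quantity that is trivially at most $rq^t$ would be pointless, and your pigeonhole step (slot $\phi(0)$ occupying $q^t$ positions, leaving at most $q^m-1-q^t$ for any other slot) lands on the right number only by coincidence.

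The irony is that your ingredients yield the correct proof once aggregated over the set. For a slot $2\le i\le\ell$, its level set $\bigcup_{g\in G}(\Ga_i g+V)$, of size $rq^t$, is entirely contained in the argument set $\F_{q^m}\setminus\{\Ga_j\}$ of $\bs_j$ for every $j\ne i$ (the super-cosets are pairwise disjoint by Lemma \ref{l1}, and $\Ga_j$ lies in the $j$-th one since $1\in G$, $0\in V$), contributing $rq^t$ occurrences to each such sequence; for $j=i$ exactly the single element $\Ga_i$ is missing, contributing $rq^t-1$. Summing over the $\ell-1$ sequences of $\mS$ gives exactly $rq^t(\ell-1)-1=q^m-q^t-1$. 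The slot $i=1$ (your $\phi(0)$, with level set $V$, which avoids every $\Ga_j$ with $j\ge 2$) contributes $q^t(\ell-1)=(q^m-q^t)/r\le q^m-q^t-1$ when $r\ge 2$. Note that with the correct reading no crude estimate is needed at all: the count is exact, and the $-1$ comes precisely from your observation that $\Ga_i$ never occurs as an argument in its own sequence.
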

\begin{proof}
For $1\le i\le \ell$, denote by $A_i$ the  set $\{\theta^k+\Ga_i:\; 0\le k\le q^m-2\}$.
By Lemma \ref{l3}, to count appearance of an element, we have to count the appearance number $n_{ij}$ of elements of $\bigcup_{g\in G}(\Ga_ig+V)$ in $A_j$ for all $1\le i\le \ell$ and $2\le  j\le \ell$. Then we have $\m(\mS)=\max\limits_{1\le i\le \ell}\{\sum_{j=2}^\ell n_{ij}\}$.

 For $i=1$ and $2\le  j\le \ell$, every element in $V$ appears in $A_j$ once. Thus, the total appearance number of elements of $\bigcup_{g\in G}(\Ga_1g+V)$ in $\bigcup_{j=2}^\ell A_j$ is $|V|(\ell-1)=q^t\frac{q^{m-t}-1}r< q^m-q^t-1$ for $r\geq2$.

For $2\le i\le \ell$,   every element in $\bigcup_{g\in G}(\Ga_ig+V)$ appears in $A_j$ once for $j\neq i$. For $j=i$,  every element in $\bigcup_{g\in G}(\Ga_ig+V)$ except for $\Ga_i$ appears in $A_i$ once and $\Ga_i$ does not appear in $A_i$. Thus, the total appearance number of elements of $\bigcup_{g\in G}(\Ga_ig+V)$ in $\bigcup_{j=2}^\ell A_j$ is $r|V|(\ell-1)-1=rq^t\left(\frac{q^{m-t}-1}r\right)-1= q^m-q^t-1$. The proof is completed.
\end{proof}

The recursive constructions extend the FHS set $\mS$ in the above section by choosing different one-coincide sequence sets. There are some known constructions of OC sequence sets~\cite{OC,Cao06,W15,Ren17,Niu18}. Based on the framework of the recursive construction in~\cite{Niu18}, we can obtain FHS sets with new parameters by combing the  FHS set in Theorem \ref{FHS1} with OC sequence sets.

Let $e=\frac{q^{m-t}-1}{r}$, then we have the following theorem and corollaries.

\begin{theorem}\label{FHS-ex} Put $e=\frac{q^{m-t}-1}r$. Then whenever there is an  $(n,s;v)$ OC sequence set with $s>q^m-q^t-1$, there exists a
 $\left(n(q^m-1), e, rq^t; v(e+1)\right)$ FHS set $\mX$. Furthermore, $\mX$ is optimal if
  \[\left\lceil\frac{n(q^m-1)e-v(e+1)}{n(q^m-1)e-1}\frac{n(q^m-1)}{v(e+1)}\right\rceil=
\left\lceil\frac{(q^m-1)e-(e+1)}{(q^m-1)e-1}\frac{(q^m-1)}{e+1}\right\rceil.\]
\end{theorem}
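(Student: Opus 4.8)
The plan is to prove the two assertions separately: obtain the FHS set $\mX$ directly from the recursive machinery of this section, and then read off optimality by comparing the Peng-Fan bounds of $\mX$ and of the base set $\mS$.

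First I would establish existence by feeding the base object into Lemma \ref{lrec}. By Theorem \ref{FHS1} (with $r\ge 2$) there is a $(q^m-1,\,e,\,rq^t;\,e+1)$ FHS set $\mS$, where $e=\frac{q^{m-t}-1}{r}$, so that here $N=q^m-1$, $M=e$, $H_m=rq^t$ and $l=e+1$. Lemma \ref{lmax} gives $\m(\mS)\le q^m-q^t-1$, so the hypothesis $s>q^m-q^t-1$ forces $s\ge \m(\mS)$, which is exactly the condition required by Lemma \ref{lrec}. Applying that lemma with the given $(n,s;v)$ OC sequence set produces an $(nN,\,M,\,H_m;\,vl)$ FHS set; substituting the parameters above yields precisely the claimed $\bigl(n(q^m-1),\,e,\,rq^t;\,v(e+1)\bigr)$ FHS set $\mX$. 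In particular the value $H_m(\mX)=rq^t$ is inherited unchanged from $\mS$.

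For optimality I would substitute the parameters of $\mX$, namely length $n(q^m-1)$, family size $e$ and alphabet size $v(e+1)$, into the Peng-Fan bound \eqref{e2}; this is exactly the ceiling on the left-hand side of the displayed identity. On the other side, the right-hand ceiling is nothing but the Peng-Fan bound evaluated for the base set $\mS$ (length $q^m-1$, family size $e$, alphabet size $e+1$). Since $\mS$ is optimal by Theorem \ref{thm:1.1} (which holds under the standing hypothesis $q^m-1<e^2+(e+1)q^t-3e$ carried by the ensuing corollaries), this right-hand ceiling equals $H_m(\mS)=rq^t$. Hence the hypothesised equality of the two ceilings forces the left-hand Peng-Fan bound of $\mX$ to equal $rq^t$ as well, and as $H_m(\mX)=rq^t$ meets that bound, $\mX$ is optimal.

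The delicate point, and where I expect the main obstacle to lie, is the presence of the ceiling functions: the Peng-Fan bound for $\mX$ is only a lower bound, so one gets $rq^t\ge \mathrm{LHS}$ for free, and turning this into an equality is exactly what the displayed identity is engineered to supply. The real-valued Peng-Fan expression for $\mX$ need not round up to the same integer as that for $\mS$ in general; the equation in the statement is precisely the condition that it does. Thus the proof does not hinge on any fresh estimate but on cleanly transferring optimality from $\mS$ to $\mX$, and I would take care to record that the conclusion genuinely relies on $\mS$ already being optimal, i.e. on the right-hand ceiling equalling $rq^t$, so that the matching of the two ceilings carries optimality across.
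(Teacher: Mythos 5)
Your proposal is correct and takes essentially the same route as the paper, whose entire proof is the one-line citation of Theorem \ref{thm:1.1} and Lemma \ref{lrec}; you simply make explicit the steps the paper leaves implicit, namely invoking Lemma \ref{lmax} to verify the hypothesis $s\ge \m(\mS)$ of Lemma \ref{lrec}, and reading the displayed ceiling identity as equality of the Peng-Fan bounds of $\mX$ and $\mS$. Your remark that the optimality conclusion genuinely relies on the base set $\mS$ being optimal---i.e., on the implicit hypothesis $q^m-1<e^2+(e+1)q^t-3e$ from Theorem \ref{thm:1.1}, without which the common ceiling would fall strictly below $rq^t=H_m(\mX)$---is an accurate reading of a detail the paper glosses over.
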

\begin{proof} The desired result follows from Theorem \ref{thm:1.1} and Lemma \ref{lrec}.
\end{proof}

{\bf Proof of Theorem \ref{thm:1.2}:}
\begin{proof}
By applying the $(k,\lpf(k)-1; k)$ OC sequence set given in~\cite{Chung14,Bao16,Niu18}, the $(p-1,p;p)$ OC sequence set given in~\cite{Chung14,Bao16,Niu18} and the $\left(k(p-1),\min\{\lpf(k)-1,p\}; kp\right)$ OC sequence set given in~\cite{Chung14,Niu18}, respectively to Theorem \ref{FHS-ex}, we obtain the desired FHS sets in Table 1.
\end{proof}

We illustrate Theorem \ref{thm:1.2}  by the following examples.
\begin{ex}{\rm
Choose the optimal $(80,13,6;14)$ FHS set $\mS_1$ in Example \ref{e1}. We can obtain the maximum appearance number of frequency slots in $\mS_1$ is $77$. Thus, the desired recursive result of FHS sets with new parameters are as follows.
\\$1)$ There is an optimal $(79\times80,13,6;79\times14)$ FHS set by applying the $(79,78;79)$ OC sequence set.
\\$2)$ There is an optimal $(80\times80,13,6;81\times14)$ FHS set by applying the $(80,81;81)$ OC sequence set.
\\$3)$ There is an optimal $(79\times80\times80,13,6;79\times81\times14)$ FHS set by applying the $(79\times80,78;79\times81)$ OC sequence set.
}\end{ex}

\begin{ex}{\rm
Choose the optimal $(728,40,18;41)$ FHS set $\mS_2$ in Example \ref{e2}. We can obtain the maximum appearance number of frequency slots in $\mS_2$ is $719$. Thus, the desired recursive result of FHS sets with new parameters are as follows.
\\$1)$ There is an optimal $(727\times728,40,18;727\times41)$ FHS set by applying the $(727,726;727)$ OC sequence set.
\\$2)$ There is an optimal $(728\times728,40,18;729\times41)$ FHS set by applying the $(728,729;729)$ OC sequence set.
\\$3)$ There is an optimal $(727\times728\times728,40,18;727\times729\times41)$ FHS set by applying the $(727\times728,726;727\times729)$ OC sequence set.
}\end{ex}

\section{Conclusion}
In this paper, we present new construction of optimal FHS sets by mixing both multiplicative and additive groups structures of finite fields simultaneously. The construction provides a new family of optimal $\left(q^m-1,\frac{q^{m-t}-1}{r},rq^t;\frac{q^{m-t}-1}{r}+1\right)$ frequency hopping sequence sets archiving the Peng-Fan bound. Thus, the FHS sets constructed in literatures using either multiplicative groups or additive groups of finite fields are all included in our family. It should be noted that our construction not only includes some constructions in literatures as special cases, but also gives new and flexible parameters due to the free choice of $t$. In addition, some other FHS sets can be obtained via the well-known recursive constructions through one-coincidence sequence set. As a result, our constructions allow a great flexibility of choosing FHS sets for a given frequency-hopping spread spectrum system.

\end{document}